\numberwithin{equation}{section}
\def\Q{\mathcal{Q}}
\def\N{\mathbb{N}}
\def\R{\mathbb{R}}
\def\Z{\mathbb{Z}}
\def\S{\mathbb{S}}
\def\A{{\cal A}}
\def\Q{{\cal Q}}
\DeclareMathOperator{\Id}{I}
\newcommand{\norm}[1]{\left\lVert#1\right\rVert}      
\theoremstyle{theorem}
\newtheorem{theorem}{Theorem}[section]
\newtheorem{lemma}[theorem]{Lemma}
\theoremstyle{definition}
\newtheorem{definition}[theorem]{Definition}
\newtheorem{remark}[theorem]{Remark}
\title{\bf Local minimality properties of circular motions in $1/r^\alpha$ potentials and
of the figure-eight solution of the 3-body problem}
\author[1,2]{M. Fenucci\thanks{mail: \texttt{fenucci@mail.dm.unipi.it}}}
\affil[1]{Department of Astronomy, Faculty of Mathematics, University of Belgrade, Studentski trg 16, 11000 Belgrade, Serbia}
\affil[2]{Dipartimento di Matematica, Università di Pisa, Largo B. Pontecorvo 5, 56127 Pisa, Italy}
\begin{document}
\maketitle
\begin{abstract}
   We first take into account variational problems with periodic boundary
   conditions, and briefly recall some sufficient conditions for a periodic solution
   of the Euler-Lagrange equation to be either a directional, a weak, or a strong local minimizer. 
   We then apply the theory to circular orbits of the Kepler problem with potentials
   of type $1/r^\alpha, \, \alpha > 0$. By using numerical computations, we show that
   circular solutions are strong local minimizers for $\alpha > 1$, while they are saddle
   points for $\alpha \in (0,1)$. Moreover, we show that for $\alpha \in (1,2)$ the
   global minimizer of the action over periodic curves with degree $2$ with respect to the
   origin could be achieved on non-collision and non-circular solutions.  
   After, we take into account the figure-eight solution of the 3-body
   problem, and we show that it is a strong local minimizer over a particular 
   set of symmetric periodic loops.
   \vskip0.2truecm
   \noindent
   \textbf{AMS Subject Classification: } 34B15, 49K15, 34C25, 70F10

   \vskip0.1truecm
   \noindent
   \textbf{Keywords: } local minimality, calculus of variations, periodic solutions,
   Kepler problem, figure-eight
\end{abstract}
\section{Introduction}
In recent years, new periodic solutions of the Newtonian $N$-body problem have been
discovered by means of variational methods. In particular, taking into account $N$ equal unitary masses
and denoting by $u=(u_1,\dots,u_N):[0,T] \to \R^{3N}$ their motion, periodic orbits
are found as minimizers of the Lagrangian action functional
\begin{equation}
   \A(u) = \int_{0}^{T}\bigg( \frac{1}{2} \sum_{i=1}^N |\dot{u}_i|^2 + \sum_{1\leq i < j
   \leq N} \frac{1}{|u_i-u_j|} \bigg)dt,
   \label{eq:NbodyAction}
\end{equation}
on a set $X$ of $T$-periodic loops, see for instance \cite{chenciner-montgomery_2000, chenciner-venturelli_2000, chenciner_2002, 
chenciner_2005, ferrario-terracini_2004, barrabes-etal_2006, fusco-gronchi-negrini_2011}, 
and references therein. 
Numerical techniques have been developed as well for the computation of these orbits, 
see for instance \cite{moore_1993, simo_2000, simo_2001, simo_2002, 
kapela-zgli_2002, moore-nauenberg_2006, kapela-simo_2007, kapela-simo_2017, fenucci-gronchi_2018} 
and references therein. 
However, as already noticed in \cite{simo_2000}, despite 
the effectiveness of these methods in computing such orbits, they do not ensure that what is 
computed is actually a minimizer of the action, not even locally. For this reason, we
are interested in finding conditions that guarantee the local minimality, that can be used
to understand what kind of stationary point has been computed.

In Calculus of Variations, when the action functional is defined on a subset of $C^1$
curves $u:[0,T]\to\R^n$ such that 
\begin{equation} 
   \begin{cases} 
      u(0) = u_0, \\ u(T) = u_T, 
   \end{cases}
   \label{eq:fixedEndProblem} 
\end{equation} 
where $u_0,u_T \in \R^n$ are fixed points, the problem is usually called \textit{fixed end-point problem} or \textit{problem
of Bolza}, and the theory of local minimizers is a very well-known topic. Indeed, the
mathematical formulation of the fixed end-point problem started between the 17th and the 18th centuries,
when Bernoulli, Leibniz and Newton independently studied the \textit{brachistochrone
problem} \cite{phillips_1967, haws-kiser_1995}. Further developments have been done in 
the late 18th century and all over the entire 19th century by, for instance, Euler,
Lagrange, Legendre, Jacobi, and Weierstrass. In the last century, the fixed end-point
problem became a standard topic in Calculus of Variations, being the subject of several
classical mathematical textbooks (see for instance \cite{pars_1962, giaquinta_2004,
vanbrunt_2003}).  The interested reader can refer to \cite{goldstine_1980} for a detailed
chronological history of Calculus of Variations. 
%
When the boundary conditions \eqref{eq:fixedEndProblem} change, definitions and proofs
have to be adapted and, depending on the problem one is facing, different necessary and
sufficient conditions arise.
Theories for the positivity of quadratic functionals with disjoined boundary
conditions\footnote{For disjoined boundary conditions we mean that they can be expressed
   through two different equations $\phi_0(u(0)) = 0, \, \phi_T(u(T)) = 0$, while with
   general boundary conditions we mean that they are expressed as $\phi(u(0),u(T)) = 0$.}
   can be found in \cite{zeidan_1992, dosla-zezza_1994}, while the case of general
   boundary conditions is treated in \cite{zeidan-zezza_1989, dosla-zezza_1994,
   dosla-dosly_1997, stefanini-zezza_1997, zeidan_1999}. Moreover, in
   \cite{allwright-vinter_2005} the authors give second order minimality conditions for
   periodic optimal control problems.
A theory of weak local minimizers for different boundary condition types can be found in
\cite{hilscher-zeidan_2008,  hilscher-zeidan_2008c}, and references therein.
A theory of strong local minimizers for disjoined boundary conditions has been developed in
\cite{zeidan_1993}, and further improvements led to a theory for general boundary
conditions in \cite{zeidan_2001}. These two works are both based on the existence of a symmetric
solution of a Riccati differential equation which satisfies a specific boundary condition.

In this work, we first recall some sufficient conditions for the local minimality in 
variational problems with periodic boundary conditions. In particular, we adapt to the
periodic case the proof given in \cite{clarke-zeidan_1986} for the fixed end-point
problem, by providing appropriate boundary conditions of the symmetric solution of the Riccati
differential equation.
%
After, we use the theory in two problems of Celestial Mechanics.
First, we take into account the circular solutions in potentials of type $1/r^\alpha$,
$\alpha>0$, where $r$ is the distance from the center.  
By using numerical computations, we show that they are strong local minimizers 
for $\alpha>1$, and saddle points for $\alpha \in (0,1)$. 
Moreover, we present an example with $\alpha \in (1,2)$ where the
global minimizer of the action over periodic curves with degree $2$ with respect to the
origin is achieved on a non-collision and non-circular solutions. 
Then, we take into account the figure-eight solution of the 3-body problem
(see \cite{chenciner-montgomery_2000, moore_1993, simo_2002}). 
By using numerical computations, we show that the figure-eight is not optimal over 
the entire set of periodic loops, but it becomes a strong local minimizer when additional 
symmetries are taken into account.
\section{Definitions of local minimizers}
\label{s:defs}
Let $T>0$, and consider a functional 
\begin{equation}
   \mathcal{A}(u) = \int_{0}^{T}L(t,u,\dot{u}) dt,
   \label{eq:funct}
\end{equation}
where $L: [0,T] \times \Omega \to \R$ is a $C^2$ function, $T$-periodic in the variable
$t$, and $\Omega \subseteq \R^n \times \R^n$ is an open set. 
We denote the space of the $C^1$ $T$-periodic functions with
\begin{equation}
   V = \{ u \in C^1([0,T], \R^n) : u(0) = u(T) \},
   \label{eq:TperiodicFunctions}
\end{equation}
and assume that $\A$ is defined on a set $X \subseteq V$.
We say that $u_0 \in X$ is a
\begin{itemize}
   \item[(GM)] \textit{global minimum point} if $\A(u) \ge \A(u_0)$ for all $u \in X$;
   \item[(SLM)] \textit{strong local minimum point} if there exists $\varepsilon > 0$ such
       that, for all $u \in X$ satisfying
      \[
         \norm{u-u_0}_\infty < \varepsilon,
      \]
      we have that $\A(u) \ge \A(u_0)$.
   \item[(WLM)] \textit{weak local minimum point} if there exists $\varepsilon > 0$ such
      that, for all $u \in X$ satisfying
      \[
         \norm{u-u_0}_\infty + \norm{\dot{u}-\dot{u}_0}_\infty  < \varepsilon,
      \]
     we have that $\A(u) \ge \A(u_0)$. 
   \item[(DLM)] \textit{directional local minimum point} if the function 
      \[
         \varphi(s) = \A(u_0+sv),
      \]
      has a local minimum point at $s=0$ for all $v \in V$. Note that, fixed
      $v \in V$, $\varphi:(-\delta,\delta) \to \R$ is a function of the real variable
      $s$, defined for $\delta>0$ small enough.
\end{itemize}
From the classic theory of Calculus of Variations it is well-known that, if $u_0$ is a $C^2$ local minimum point, then it solves the
Euler-Lagrange equation associated to \eqref{eq:funct}, i.e.
\begin{equation}
   \frac{d}{dt} L_{\dot{u}}\big(t, u_0(t),\dot{u}_0(t)\big) = L_u\big(t,
   u_0(t),\dot{u}_0(t)\big).
   \label{eq:EL}
\end{equation}
Moreover, $u_0$ satisfies a periodic condition on the derivative, i.e. 
\begin{equation}
   L_{\dot{u}}(0,u_0(0),\dot{u}_0(0)) = L_{\dot{u}}(T,u_0(T),\dot{u}_0(T)),
   \label{eq:PBCv}
\end{equation}
which leads to
\begin{equation}
   \dot{u}_0(0) = \dot{u}_0(T),
   \label{eq:periodicVelocity}
\end{equation}
under the assumption that $L$ is globally convex in $\dot{u}$.
Note that a solution $u_0$ of \eqref{eq:EL} is a (DLM) if and only if the \textit{second variation}
\begin{equation}
   \delta^2 \A(v) = \int_{0}^{T} \bigl(v(t) \cdot \hat{L}_{uu}(t) v(t) + 2 \dot{v}(t)\cdot
   \hat{L}_{u\dot{u}}(t) v(t) + \dot{v}(t)\cdot \hat{L}_{\dot{u}\dot{u}}(t)\dot{v}(t)
   \bigr) dt,
   \label{eq:sv}
\end{equation}
is non-negative for all $v \in V$, where
\begin{gather*}
   \hat{L}_{uu}(t) = L_{uu}(t,u_0(t), \dot{u}_0(t)), \\ 
   \hat{L}_{u\dot{u}}(t) = L_{u\dot{u}}(t,u_0(t), \dot{u}_0(t)), \\ 
   \hat{L}_{\dot{u}\dot{u}}(t)= L_{\dot{u}\dot{u}}(t,u_0(t), \dot{u}_0(t)),
\end{gather*}
are the second derivatives of the Lagrangian along $u_0$.
Note that $\delta^2 \A$ is a quadratic functional, defined on the whole space of $T$-periodic
functions $V$. In the following, we recall some sufficient conditions for a solution of the
Euler-Lagrange equation to be either a (DLM), (WLM), or (SLM).

We stress out that what presented in Section~\ref{s:quadFun} and \ref{s:slm} can be
obtained as particular case of results obtained for general boundary conditions (e.g.
\cite{dosla-dosly_1997, stefanini-zezza_1997, zeidan_1999, zeidan_2001}).
Nevertheless, it is useful to recall simpler proofs specialized to the case of periodic
boundary conditions, and see how to adapt them when an additional symmetry is present.
\section{Quadratic functionals}
\label{s:quadFun}
We consider a quadratic functional
\begin{equation}
   \Q(v) = \int_{0}^{T} \big( v(t)\cdot P(t) v(t) + 2 \dot{v}(t)\cdot Q(t) v(t) +
   \dot{v}(t)\cdot R(t)
   \dot{v}(t) \big) dt,
   \label{eq:quadfunc}
\end{equation}
defined on the whole space $V$, where $P,Q,R:[0,T] \to \R^{n\times n}$ are $C^1$
matrix functions such that\footnote{The
character $T$ is already used to denote the period. However, when we use the
superscript $T$ for a matrix, we mean the transpose of the matrix itself. This notation
will not be confusing in the following, since it is always clear when we intend to
transpose a matrix.} $P(t)=P^T(t), \, R(t)=R^T(t)$ for all $t \in [0,T]$.
The Euler-Lagrange equation associated to \eqref{eq:quadfunc} is 
\begin{equation}
   \frac{d}{dt}\big(R\dot{y} + Qy\big) = Q^T \dot{y} + Py,
   \label{eq:jde}
\end{equation}
and it is usually called \textit{Jacobi differential equation}.
If $\det R(t) \neq 0$ for all $t \in [0,T]$, setting $z = R\dot{y}+Qy$, we can write the
system \eqref{eq:jde} as
\begin{equation}
   \begin{cases}
      \dot{y} = Ay+Bz, \\
      \dot{z} = Cy-A^Tz,
   \end{cases}
   \label{eq:jde1}
\end{equation}
where 
\[
   A = -R^{-1}Q, \quad B=R^{-1}, \quad C = P-Q^TR^{-1}Q.
\]
Note that $B$ and $C$ are symmetric matrices.
It is also useful to introduce the matrix version of equation \eqref{eq:jde1}, i.e.
\begin{equation}
   \begin{cases}
      \dot{Y} = AY+BZ, \\
      \dot{Z} = CY-A^TZ,
   \end{cases}
   \label{eq:jde2}
\end{equation}
where $Y, Z: [0,T] \to \R^{n \times n}$ are matrix functions.
\begin{remark}
   Note that, if $(Y_1,Z_1), (Y_2,Z_2)$ are two solutions of \eqref{eq:jde2}, then 
   \begin{equation}
      Y_1^T(t)Z_2(t) - Z_1^T(t)Y_2(t) \equiv K,
      \label{eq:constComb}
   \end{equation}
   where $K \in \R^{n \times n}$ is a constant matrix. 
   \label{rmk:constComb}
\end{remark}
\begin{definition}
A solution $(Y,Z)$ of \eqref{eq:jde2} is said to be \textit{self-conjoined} if
\begin{equation}
   Y^TZ-Z^TY \equiv 0.
   \label{eq:scs}
\end{equation}
\end{definition}
%
\begin{definition}
   Let $(y,z)$ be a non-zero solution of system \eqref{eq:jde1} such that $y(0)=0$. A point $c \in
   (0,T]$ is said to be \textit{conjugate} with $0$ if $y(c)=0$.
\end{definition}

\begin{remark}
Note that $c \in (0,T]$ is conjugate with $0$ if and only if 
$\det Y_0(c) =0$,
where $(Y_0,Z_0)$ is the solution of \eqref{eq:jde2} with initial conditions
\[
\begin{cases}
   Y_0(0) = 0, \\
   Z_0(0) = \Id.
\end{cases}
\]
\label{rmk:detNonZero}
\end{remark}

\begin{definition}
   The \textit{Legendre condition} (L) $\big($\textit{strengthened Legendre
      condition} (L')$\big)$ holds
      if $R(t) \geq 0$\footnote{When we write $A>0$ (resp. $A \geq 0$), where $A \in \R^{n \times n}$ is a symmetric
      matrix, we mean that $A$ is positive definite (resp. positive semi-definite).}
      $\big( R(t) > 0\big)$ for all $t \in
      [0,T]$.
\end{definition}

\begin{definition}
The \textit{regularity condition} (R) $\big($\textit{strengthened regularity
      condition} (R')$\big)$ holds if
      \[
          \int_{0}^{T} P(t) dt   \geq 0 \quad \bigg( \int_{0}^{T} P(t) dt > 0 \bigg).
      \]
\end{definition}

\begin{definition}
   The \textit{Jacobi condition} (J) $\big($\textit{strengthened Jacobi condition}
      (J')$\big)$ holds if every non-zero solution $(y,z)$ of
      \eqref{eq:jde1} with initial condition $y(0) = 0$ does not have any conjugate point
   $c \in (0,T)$ $\big(c \in (0, T]\big)$ with 0.
\end{definition}

In the classic setting of the fixed end-point problem, it is known that (L) is a necessary
condition for the positivity of a quadratic functional, and moreover, if (L') holds, then (J)
is also necessary, i.e. there are no
conjugate points (see e.g. \cite{giaquinta_2004, pars_1962, vanbrunt_2003}).
Here we can prove similar necessary conditions.
\begin{lemma}
   If $\Q(v)\geq 0$ for all $v \in V$, then conditions \textnormal{(L)} and \textnormal{(R)} hold. 
   \label{lemma:necessary1}
\end{lemma}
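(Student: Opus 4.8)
The plan is to establish the two conditions separately, using in each case a different family of admissible test functions in $V$.

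For the regularity condition (R) I would use \emph{constant} test functions, which are admissible precisely because of the periodic boundary condition. Fixing $\xi \in \R^n$ and taking $v(t) \equiv \xi$, one has $v \in V$ and $\dot{v} \equiv 0$, so only the first term of \eqref{eq:quadfunc} survives:
\[
   \Q(v) = \int_0^T \xi \cdot P(t)\xi\, dt = \xi^T\Big(\int_0^T P(t)\, dt\Big)\xi.
\]
Since $\Q(v)\ge 0$ by hypothesis and $\xi$ is arbitrary, this immediately yields $\int_0^T P(t)\,dt \ge 0$, that is (R). This step is essentially free, and it is exactly where periodicity (as opposed to fixed endpoints) is used, since nonzero constants are not admissible in the Bolza problem.

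For the Legendre condition (L) I would use the classical localization argument based on rapidly oscillating test functions concentrated near an interior point. Fix $t_0 \in (0,T)$ and $\xi \in \R^n$, pick a nonzero $\phi \in C^\infty_c\big((-1,1)\big)$, and for small $\varepsilon>0$ set $v_\varepsilon(t) = \xi\,\phi\big((t-t_0)/\varepsilon\big)$. For $\varepsilon$ small enough the support of $v_\varepsilon$ lies in $(0,T)$, so $v_\varepsilon$ vanishes together with its derivative near $0$ and $T$; in particular $v_\varepsilon \in V$. After the change of variables $s=(t-t_0)/\varepsilon$ the three terms of $\Q(v_\varepsilon)$ scale, respectively, like $\varepsilon$, like $O(1)$, and like $1/\varepsilon$, so that
\[
   \varepsilon\,\Q(v_\varepsilon) \longrightarrow \Big(\xi^T R(t_0)\xi\Big)\int_{-1}^{1}\phi'(s)^2\, ds
   \qquad\text{as } \varepsilon \to 0^+ .
\]
Since $\Q(v_\varepsilon)\ge 0$ and $\int_{-1}^1 \phi'(s)^2\,ds>0$, letting $\varepsilon\to 0^+$ forces $\xi^T R(t_0)\xi\ge 0$. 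As $t_0\in(0,T)$ and $\xi$ are arbitrary, we obtain $R(t)\ge 0$ on the open interval, and the continuity of $R$ extends this to the endpoints, giving (L).

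Both computations are routine; the only point requiring a little care is the scaling analysis in the (L) step, where one must check that the cross term (involving $Q$) and the potential term (involving $P$) are genuinely of lower order after multiplication by $\varepsilon$, so that the limit isolates the $R(t_0)$ contribution. This is where the $C^1$ regularity of the coefficients and the compact support of $\phi$ enter. I expect this scaling estimate to be the main obstacle, although it remains entirely elementary.
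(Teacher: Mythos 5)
Your proof is correct and follows essentially the same route as the paper: constant variations $v(t)\equiv\xi$ (admissible thanks to periodicity) give (R), and the classical concentrated-variation argument at an interior point $t_0$ gives (L). The only difference is that the paper simply cites the fixed end-point argument for (L), whereas you carry out the scaling computation explicitly --- and your scaling analysis (terms of order $\varepsilon$, $O(1)$, and $1/\varepsilon$, respectively) is accurate.
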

\begin{proof}
   The proof that (L) holds is the same as in the fixed end-point problem, since we can restrict
   ourself to local variations vanishing at the extrema of the interval. The regularity condition (R)
   follows by taking constant variations $v(t) \equiv v_0 \in \R^n $, since in this case
   \[
      \Q(v) = v_0 \cdot \int_{0}^{T}P(t)\, dt \, v_0 \geq 0,
   \]
   which is exactly condition (R).
\end{proof}

\subsection{Sufficient conditions for positivity}
The positivity of a quadratic functional can be expressed in terms of the existence
of a symmetric solution $W$ of the \textit{Riccati differential equation} (RDE)
\begin{equation}
   \dot{W}-C+WA+A^TW+WBW = 0,
   \label{eq:rde}
\end{equation}
with certain boundary conditions. 

\begin{remark}
Note that, if $(Y,Z)$ is a solution of \eqref{eq:jde2} such that $Y(t)$ is non-singular on
the whole $[0,T]$, then 
\[ 
   W(t) = Z(t)Y^{-1}(t),
\]
is a solution of \eqref{eq:rde} defined on the whole interval $[0,T]$. 
Moreover, if $(Y,Z)$ is also self-conjoined, then $W$ is symmetric.
\end{remark}

\begin{definition}
  Condition (SR) holds if there exists a symmetric
      solution $W(t)$ of (RDE) \eqref{eq:rde} defined on the whole interval
      $[0,T]$ and such that
      \begin{equation}
         W(T)-W(0) > 0.
         \label{eq:riccatiBC}
      \end{equation}
\end{definition}
This condition is sufficient to have a positive definite quadratic functional in
the case of periodic boundary conditions. For other types of boundary conditions, 
the inequality \eqref{eq:riccatiBC} has to be adapted (see, e.g. \cite{zeidan_2001}).
\begin{theorem}
   Let conditions $\textnormal{(L')}$ and $\textnormal{(SR)}$ hold. Then we have that
   $\Q(v)> 0$ for all non-zero
   $v \in V$. 
   \label{th:DLMsuf}
\end{theorem}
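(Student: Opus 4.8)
The plan is to adapt the classical completion-of-squares argument, using the symmetric Riccati solution provided by (SR) as a null-Lagrangian device. The key observation is that, for any symmetric $C^1$ matrix function $W$, the quantity $\frac{d}{dt}(v\cdot Wv)$ is an exact derivative whose integral, thanks to periodicity, produces a purely boundary contribution controllable through \eqref{eq:riccatiBC}.

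First I would compute, using $W=W^T$, the identity $\frac{d}{dt}(v\cdot Wv) = 2\dot v\cdot Wv + v\cdot \dot W v$. Integrating over $[0,T]$ and using $v(0)=v(T)$ for $v\in V$ gives
\[
   \int_0^T \frac{d}{dt}\bigl(v\cdot Wv\bigr)\,dt = v(0)\cdot\bigl(W(T)-W(0)\bigr)v(0).
\]
Subtracting this exact-derivative term from the integrand of $\Q$ and adding back the boundary value yields
\[
   \Q(v) = \int_0^T \Bigl( v\cdot (P-\dot W)v + 2\dot v\cdot(Q-W)v + \dot v\cdot R\dot v\Bigr)dt + v(0)\cdot\bigl(W(T)-W(0)\bigr)v(0).
\]

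Next I would complete the square in $\dot v$. Setting $S=R^{-1}(Q-W)$, which is well defined by (L') since $R(t)$ is then invertible, the integrand should collapse to $(\dot v+Sv)\cdot R(\dot v+Sv)$. This happens precisely when $P-\dot W = (Q-W)^T R^{-1}(Q-W)$. The main step is to verify that this identity is exactly the Riccati equation \eqref{eq:rde} in disguise: expanding the right-hand side and using $W=W^T$ gives $Q^TR^{-1}Q - WR^{-1}Q - Q^TR^{-1}W + WR^{-1}W$, while substituting $A=-R^{-1}Q$, $B=R^{-1}$, and $C=P-Q^TR^{-1}Q$ into \eqref{eq:rde} and isolating $P-\dot W$ produces the same expression. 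I expect this algebraic matching — routine but requiring care — to be the only delicate point, since the symmetry of $W$ is what makes both the exact-derivative computation and this identity go through.

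Finally I would conclude from the resulting representation
\[
   \Q(v) = \int_0^T (\dot v+Sv)\cdot R\,(\dot v+Sv)\,dt + v(0)\cdot\bigl(W(T)-W(0)\bigr)v(0).
\]
By (L') the integrand is nonnegative and by (SR) the boundary term is nonnegative, so $\Q(v)\ge 0$. If $\Q(v)=0$, then both contributions vanish: positivity of $W(T)-W(0)$ forces $v(0)=0$, while positivity of $R$ forces $\dot v+Sv\equiv 0$ on $[0,T]$. The latter is a homogeneous linear ODE, so by uniqueness with the initial datum $v(0)=0$ we obtain $v\equiv 0$. Hence $\Q(v)>0$ for every non-zero $v\in V$.
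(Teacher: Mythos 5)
Your proof is correct and follows essentially the same route as the paper: subtract and add the exact derivative $\frac{d}{dt}(v\cdot Wv)$, use periodicity to turn it into the boundary term $v(0)\cdot(W(T)-W(0))v(0)$, and complete the square via the Riccati equation (your $(\dot v+Sv)\cdot R(\dot v+Sv)$ is identical to the paper's $(R\dot v+Qv-Wv)\cdot B(R\dot v+Qv-Wv)$ since $B=R^{-1}$). In fact your closing step is slightly more careful than the paper's: you handle the case $v(0)=0$ explicitly by invoking ODE uniqueness for $\dot v+Sv\equiv 0$, whereas the paper asserts strict positivity of the integral without this argument.
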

\begin{proof}
   Let $W$ be the symmetric solution of \eqref{eq:rde} defined on the
   whole interval $[0,T]$, such that $W(T)-W(0)>0$. 
   Let $v \in V$ be a non-zero $T$-periodic function, then we have 
   \[
      \begin{split}
         \Q(v) & = \int_{0}^{T}\bigg( v \cdot P v + 2 \dot{v} \cdot Q v + \dot{v} \cdot R
         \dot{v} - \frac{d}{dt}(v \cdot W v) + \frac{d}{dt}(v \cdot W v) \bigg) dt \\
         & = v(t)\cdot W(t) v(t)\bigg|_0^T + \int_{0}^{T}\big(
         R\dot{v}+Qv-Wv\big)\cdot B\big( R\dot{v}+Qv-Wv\big)dt \\
         & = v(0)\cdot\big(W(T)-W(0)\big)v(0) + \int_{0}^{T}\big(
         R\dot{v}+Qv-Wv\big)\cdot B\big( R\dot{v}+Qv-Wv\big)dt ,
      \end{split}
   \]
   where we have used that $v(0)=v(T)$ in the last equality. Since $B(t) = R^{-1}(t)$, from condition
   (L'), $B$ is positive definite for all $t \in [0,T]$, then the function in the integral
   is positive. Since also $W(T)-W(0)$ is positive definite, we have that $\Q(v) > 0$. 
\end{proof}

The following lemma relates the dimension and the sign of the determinant of $Y_0(t)$ with
the (SR) condition. This could be useful to search for a symmetric solution $W$ of the
Riccati differential equation satisfying the boundary condition $W(T)-W(0)>0$.
\begin{lemma}
   Let conditions $\textnormal{(L')}$ and $\textnormal{(J')}$ hold. Let $(Y_0, Z_0)$ be the solution of \eqref{eq:jde2}
   with initial conditions
   \[
      \begin{cases}
         Y_0(0) = 0, \\ Z_0(0) = \Id.
      \end{cases}
   \]
   Then
   \begin{itemize}
      \item[(i)] if $n$ is even and $\det Y_0(t)>0$ for $t \in (0,T]$, then condition
         $\textnormal{(SR)}$ holds; 
      \item[(ii)] if $n$ is odd and $\det Y_0(t)<0$ for $t \in (0,T]$, then condition
         $\textnormal{(SR)}$ holds.
   \end{itemize}
   \label{lemma:riccatiBC1}
\end{lemma}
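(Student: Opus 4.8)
The plan is to reduce (SR) to the construction of a self-conjoined solution $(Y,Z)$ of \eqref{eq:jde2} whose block $Y$ is nonsingular on all of $[0,T]$: recall from the discussion after \eqref{eq:rde} that then $W=ZY^{-1}$ is a symmetric solution of (RDE) on $[0,T]$, so the whole problem collapses to arranging the boundary inequality $W(T)-W(0)>0$. The natural candidate is the principal solution $(Y_0,Z_0)$ itself. By (J') and Remark~\ref{rmk:detNonZero}, $\det Y_0(t)\neq0$ for $t\in(0,T]$, so $Y_0$ is invertible there and $W_0:=Z_0Y_0^{-1}$ solves (RDE) on $(0,T]$; it is symmetric because $(Y_0,Z_0)$ is self-conjoined, since $Y_0^TZ_0-Z_0^TY_0$ vanishes at $t=0$ and hence everywhere by Remark~\ref{rmk:constComb}. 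However $Y_0(0)=0$ with $\dot Y_0(0)=B(0)=R(0)^{-1}>0$ from (L'), so $Y_0(t)=tB(0)+O(t^2)$ and $\det Y_0(t)=t^n\det B(0)+o(t^n)$ with $\det B(0)>0$; consequently $W_0(t)\to+\infty$ as $t\to0^+$ and $W_0$ is not defined at the left endpoint. The difficulty is thus concentrated entirely at $t=0$, and the two pieces of surviving data are exactly the parity of $n$ and the sign of $\det Y_0$ on $(0,T]$.

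I would then parametrise the symmetric solutions of (RDE) by their initial value $S=W(0)$: the solution of \eqref{eq:jde2} with $Y(0)=\Id$, $Z(0)=S$ gives $W(t;S)=Z(t;S)Y(t;S)^{-1}$ as long as $Y(\cdot;S)$ stays nonsingular, and writing $Y(t;S)=Y^{(1)}(t)+Y_0(t)S$, where $(Y^{(1)},Z^{(1)})$ has $Y^{(1)}(0)=\Id$, $Z^{(1)}(0)=0$, one has, for $t\in(0,T]$, \[ \det Y(t;S)=\det Y_0(t)\,\det\!\big(S+Y_0^{-1}(t)Y^{(1)}(t)\big). \] Taking $S=\rho\,\Id$ and $\rho\to+\infty$ recovers $W_0$ on compact subsets of $(0,T]$, so $W(T;\rho\,\Id)\to W_0(T)=Z_0(T)Y_0^{-1}(T)$ stays bounded while $W(0;\rho\,\Id)=\rho\,\Id\to+\infty$, which is the wrong sign for the boundary term. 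The real mechanism is instead to follow the family as $S$ varies and to monitor the first degeneration of $Y(\cdot;S)$ through the determinant above: because (J') forces $\det Y_0$ to keep the constant sign prescribed in (i)--(ii), the identity controls both where $Y(\cdot;S)$ can first become singular and, via the adjugate $\operatorname{adj}Y(T;S)$, the direction in which $W(T;S)$ blows up as $\det Y(T;S)\to0$. It is here that the parity of $n$ enters, through $\operatorname{sign}(\rho^{\,n})$ and $\operatorname{sign}(\det Y_0)$, and this is what splits the argument into the even and the odd case.

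The crux, and the step I expect to be the main obstacle, is to convert this sign bookkeeping into the definiteness statement $W(T;S)-S>0$ while keeping $Y(\cdot;S)$ nonsingular on the \emph{entire} interval $[0,T]$. Concretely, I would try to show that in case (i) ($n$ even, $\det Y_0>0$) and in case (ii) ($n$ odd, $\det Y_0<0$) the admissible degeneration of the frame at the right endpoint drives $W(T;S)$ towards $+\infty$ rather than $-\infty$, so that along a suitable path in the space of symmetric matrices the matrix $W(T;S)-S$ becomes positive definite before $Y(\cdot;S)$ loses invertibility anywhere on $[0,T]$. Making this rigorous is delicate, because one must simultaneously control the Loewner behaviour of $W(\cdot;S)$ along the continuation and the sign of $\det Y$ near both endpoints, and these two requirements pull against each other near $t=0$, where $Y_0$ vanishes; I would handle it by a continuity/degree argument in which the winding of $\det Y(\cdot;S)$ is pinned down by the parity of $n$ and by $\operatorname{sign}(\det Y_0)$. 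Once a symmetric $W$ on $[0,T]$ with $W(T)-W(0)>0$ is produced, (SR) holds by definition, and Theorem~\ref{th:DLMsuf} then yields positivity of $\Q$.
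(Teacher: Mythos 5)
Your proposal assembles the right objects (the linear family $Y(t;S)=Y^{(1)}(t)+Y_0(t)S$, the correspondence $S=W(0)$, the determinant factorization) but it does not contain a proof: the entire content of the lemma is the production of a symmetric Riccati solution on all of $[0,T]$ with $W(T)-W(0)>0$, and that is precisely the step you defer ("I would try to show\dots", "a continuity/degree argument"). Moreover the mechanism you sketch points in the wrong direction. You propose to let the frame degenerate at the right endpoint, so that $\det Y(T;S)\to 0$ drives $W(T;S)$ to $+\infty$; but a frame whose $Y$-block degenerates at $t=T$ gives a $W$ that is \emph{not defined} on all of $[0,T]$, so no limit of this kind can verify (SR) directly, and the winding/Loewner bookkeeping you describe for intermediate values of $S$ is never carried out. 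So the proposal has a genuine gap exactly at the crux, and the gap is not merely technical.

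The missing idea is much simpler, and it is the paper's proof: in your parametrization, send $S$ to $-\infty$ rather than $+\infty$. The paper takes the frame $Y_\varepsilon(0)=-\varepsilon\Id$, $Z_\varepsilon(0)=\Id$, which after right-multiplication by $-\tfrac1\varepsilon\Id$ (this changes neither solutions of \eqref{eq:jde2} nor $W=ZY^{-1}$) is your family at $S=-\tfrac1\varepsilon\Id$. Then the boundary inequality comes for free, with no degeneration anywhere: $W_\varepsilon(0)=S=-\tfrac1\varepsilon\Id\to-\infty$, while $W_\varepsilon(T)\to W_0(T)=Z_0(T)Y_0^{-1}(T)$ stays bounded because $(Y_\varepsilon,Z_\varepsilon)\to(Y_0,Z_0)$ uniformly and $\det Y_0(T)\neq0$ by (J'); hence $w\cdot\bigl(W_\varepsilon(T)-W_\varepsilon(0)\bigr)w = w\cdot W_\varepsilon(T)w+\tfrac1\varepsilon>0$ for small $\varepsilon$, uniformly over $w\in\S^{n-1}$ by compactness. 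Self-conjoinedness (hence symmetry of $W_\varepsilon$) is checked at $t=0$ via Remark~\ref{rmk:constComb}. The parity/sign hypotheses (i)--(ii), which in your plan carry the whole argument, enter only at the one remaining point: keeping $\det Y_\varepsilon\neq0$ on $[0,T]$, since $\det Y_\varepsilon(0)=(-\varepsilon)^n$ agrees in sign with $\det Y_0(t)$, $t\in(0,T]$, exactly in cases (i) and (ii), while away from $t=0$ uniform convergence keeps $\det Y_\varepsilon$ close to $\det Y_0\neq0$. You actually computed the $S=+\rho\Id$ limit, correctly saw that the boundary term has the wrong sign, and then abandoned the one-parameter family instead of flipping the sign of $S$. (Your instinct that the behavior of $\det Y(\cdot;S)$ near $t=0$ is the delicate point is fair --- that sign-matching step is also the tersest part of the paper's own argument --- but it is handled there by the parity observation above, not by any global continuation or degree theory.)
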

\begin{proof}
   From condition (J'), we have that the solution $(Y_0,Z_0)$ of \eqref{eq:jde2}
   is such that $\det Y_0(t) \neq 0$ for all $t \in (0,T]$, then we can define $W_0(t) =
   Z_0(t)Y_0^{-1}(t)$. Let $\varepsilon>0$ and $(Y_\varepsilon,Z_\varepsilon)$ be the solution of \eqref{eq:jde2} with initial
   conditions
   \[
   \begin{cases}
      Y_\varepsilon(0) = -\varepsilon\Id, \\ Z_\varepsilon(0) = \Id.
   \end{cases}
   \]
   From the continuous dependence of the solutions with respect to the initial conditions,
   we know that $(Y_\varepsilon, Z_\varepsilon) \to (Y_0,Z_0)$ uniformly in $[0,T]$ as
   $\varepsilon\to0$.
   Moreover, 
   \[
      \det Y_\varepsilon(0) = (-\varepsilon)^n,
   \]
   hence in the hypotheses (i) or (ii) we have that $\det Y_0(t)$ and
   $\det Y_\varepsilon(t)$ have the same sign for $t$ near zero.
   Therefore, it follows that $\det Y_\varepsilon(t) \neq 0$ for
   all $t \in [0,T]$, for all $\varepsilon < \bar{\varepsilon}$ with $\bar{\varepsilon}$ small
   enough. Moreover, evaluating \eqref{eq:constComb} for $t=0$, we obtain that $(Y_\varepsilon, Z_\varepsilon)$ is self-conjoined, hence 
   $W_\varepsilon(t) = Z_\varepsilon(t)Y^{-1}_\varepsilon(t)$
   is a symmetric solution of \eqref{eq:rde} defined on the whole $[0,T]$.    

   Now we prove that there exists $\varepsilon>0$ small enough such that 
   \begin{equation}
      w\cdot\bigl(W_\varepsilon(T)-W_\varepsilon(0)\bigr)w>0,
      \label{eq:posDefThesis}
   \end{equation}
   for all $w \in \R^n \setminus \{ 0 \}$. Without loss of generality, we can prove
   \eqref{eq:posDefThesis} for all $w \in \S^{n-1} \subset \R^n$.
   First we note that
   \[
      W_\varepsilon(0) = Z_\varepsilon(0)Y_\varepsilon^{-1}(0) = -\frac{1}{\varepsilon}
      \Id. 
   \] 
   Let $w \in \S^{n-1}$ be a vector on the unit sphere, then 
   \begin{align*}
      \lim_{\varepsilon \to 0^+} w\cdot W_\varepsilon(T) w & = w \cdot W_0(T) w \in
         \R, \\
      \lim_{\varepsilon \to 0^+} w \cdot W_\varepsilon(0) w & = 
      \lim_{\varepsilon \to 0^+} \frac{1}{\varepsilon} = -\infty.
   \end{align*}
   Therefore, since the unit sphere is compact, inequality \eqref{eq:posDefThesis} is
   verified uniformly for $\varepsilon$ small enough, hence the thesis.
\end{proof}

\section{Weak and strong local minimizers}
\label{s:slm}
To discuss weak and strong local minimizers, we need few other definitions and conditions,
which are also used in the classic fixed end-point problem. 
We define the \textit{Weierstrass excess function} $E$ as
\begin{equation}
   E(t,u,v,w) := L(t,u,w) - L(t,u,v) - (w-v) \cdot L_{\dot{u}}(t,u,v).
   \label{eq:wef}
\end{equation}
Let $u_0 \in X$ be a solution of the Euler-Lagrange equation \eqref{eq:EL}. To simplify
the notations, we define the \textit{tube} around $u_0$ of radius
$\varepsilon>0$ as
\[
   T(u_0, \varepsilon) = \big\{ (t, y) \in [0,T] \times \R^n : |y-u_0(t)| < \varepsilon
   \big\}, 
\]
and the \textit{restricted tube} as
\[
   RT(u_0, \varepsilon) = \big\{ (t, y, v) \in [0,T] \times \R^n \times \R^n : |y-u_0(t)| <
      \varepsilon, \, |v-\dot{u}_0(t)|<\varepsilon \big\}.
\]
We introduce also additional conditions.
\begin{definition}
 The \textit{Weierstrass condition} (W) $\big($\textit{strengthened Weierstrass condition}
 (W')$\big)$ holds if
      \begin{equation}
         E(t,u_0(t),\dot{u}_0(t), w) \geq 0,
         \quad
         (E(t,y,v,w) \geq 0),
         \label{eq:wcondition}
      \end{equation}
      for all $t \in [0,T]$ $\big($for all $(t,y,v) \in RT(u_0,\varepsilon)$$\big)$ and for all $w \in \R^n$. 
\end{definition}

\begin{definition}
   A $C^1$ function $V(t,y)$ satisfies the \textit{Hamilton-Jacobi
      inequality} (HJ) for $v \in \R^n$ if
      \begin{equation}
         \begin{split}
         V_t(t,y)  + &V_y(t,y) \cdot v - L(t,y,v) \leq \\ &V_t(t,u_0(t)) + V_y(t, u_0(t)) \cdot
         \dot{u}_0(t) - L(t,u_0(t),\dot{u}_0(t)).
      \end{split}
         \label{eq:HJineq}
      \end{equation}

\end{definition}

\begin{remark}
   Note that condition (W') is satisfied whenever $L$ is globally convex in
   $\dot{u}$, i.e. when $L_{\dot{u}\dot{u}}\geq 0$. 
   In problems coming from classical mechanics this condition is usually fulfilled, since the velocity
   $\dot{u}$ is contained only in the kinetic energy, which is a positive definite
   quadratic form.
   \label{rmk:sw}
\end{remark} 

Sufficient conditions for both the weak and the strong local minimality can be formulated by
using the (SR) condition and the strengthened Weierstrass
condition (W'), adapting the proof for the classical end-point problem given in
\cite{clarke-zeidan_1986}. 
\begin{theorem}
   Let $u_0 \in X$ be a periodic solution of the Euler-Lagrange equation \eqref{eq:EL}. Suppose
   that conditions $\textnormal{(L')}$ and $\textnormal{(SR)}$ are satisfied for the second variation associated
   to $u_0$. Then $u_0$ is a (WLM).
   If condition $\textnormal{(W')}$ also holds, then $u_0$ is a $\textnormal{(SLM)}$.
   \label{th:SLMsuf}
\end{theorem}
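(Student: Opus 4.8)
The plan is to build a field of extremals around $u_0$ and use the Hamilton-Jacobi inequality to control the action. First I would establish the weak local minimality. The key observation is that the second variation $\delta^2\A$ is exactly a quadratic functional of the form \eqref{eq:quadfunc}, with $P(t)=\hat L_{uu}(t)$, $Q(t)=\hat L_{\dot u u}(t)$ (note the transpose convention), and $R(t)=\hat L_{\dot u\dot u}(t)$. Hence conditions (L') and (SR) applied to the second variation let me invoke Theorem~\ref{th:DLMsuf} to conclude that $\delta^2\A(v)>0$ for all non-zero $v\in V$. A standard Taylor expansion argument then shows that positivity of the second variation, together with the fact that the Lagrangian is $C^2$, forces $u_0$ to be a (WLM): for $u$ close to $u_0$ in the $C^1$ norm, writing $v=u-u_0$ and expanding $\A(u)=\A(u_0)+\delta\A(v)+\tfrac12\delta^2\A(v)+o(\|v\|^2)$, the first variation vanishes since $u_0$ solves \eqref{eq:EL} with the periodic condition \eqref{eq:periodicVelocity}, and the strict positivity of $\delta^2\A$ dominates the remainder.

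For the strong local minimality, the plan is to follow \cite{clarke-zeidan_1986} and construct a \emph{verification function} $V(t,y)$ satisfying the Hamilton-Jacobi inequality (HJ). The natural candidate, borrowed from the field theory of the calculus of variations, is a quadratic-in-$y$ function built from the symmetric Riccati solution $W$ guaranteed by (SR). Concretely I would set something like
\[
V(t,y) = L_{\dot u}\bigl(t,u_0(t),\dot u_0(t)\bigr)\cdot(y-u_0(t)) + \tfrac12 (y-u_0(t))\cdot W(t)(y-u_0(t)),
\]
possibly adjusted by a lower-order term, and verify that with this choice the left-hand side of (HJ) is maximized along $u_0$. Differentiating $V$ and using the Euler-Lagrange equation \eqref{eq:EL} together with the Riccati equation \eqref{eq:rde} should make the (HJ) inequality reduce, after a Taylor expansion of $L$ near $(u_0,\dot u_0)$, to the pointwise positivity of the integrand governed by $R(t)=\hat L_{\dot u\dot u}(t)>0$, which holds by (L').

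Once (HJ) is in hand, the strong minimality follows by the Weierstrass-type argument. For any admissible $u\in X$ with $\|u-u_0\|_\infty<\varepsilon$, I would write
\[
\A(u)-\A(u_0) = \int_0^T \Bigl(L(t,u,\dot u) - L(t,u_0,\dot u_0)\Bigr)\,dt,
\]
add and subtract the total derivative $\tfrac{d}{dt}V(t,u)$ along $u$, and use the fundamental theorem of calculus. The boundary term $V(T,u(T))-V(0,u(0))$ must vanish (or be non-negative): here the periodicity $u(0)=u(T)$ of both $u$ and $u_0$, combined with the periodicity condition on $W$ encoded in the boundary inequality \eqref{eq:riccatiBC}, is precisely what makes the periodic case work and replaces the fixed-endpoint cancellation in \cite{clarke-zeidan_1986}. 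What remains is an integral of $L(t,u,\dot u)-\bigl(V_t(t,u)+V_y(t,u)\cdot\dot u\bigr)$, which by (HJ) is bounded below by the corresponding expression along $u_0$, and a cancellation leaves exactly an integral of the Weierstrass excess function $E$ evaluated in the restricted tube, which is non-negative by (W').

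The main obstacle I expect is the treatment of the boundary term and the precise construction of $V$ so that the periodic boundary conditions close the argument. In the fixed-endpoint problem the verification function need only match at the two fixed endpoints, but here I must arrange that the quadratic correction built from $W$ interacts correctly with the jump $W(T)-W(0)>0$ so that the boundary contribution has the right sign; getting the algebra of this boundary term right, rather than the field-theoretic interior estimate, is where the adaptation of \cite{clarke-zeidan_1986} genuinely requires care.
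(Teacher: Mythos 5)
Your (SLM) half is essentially the paper's own argument, which in turn adapts \cite{clarke-zeidan_1986}: a quadratic verification function built from the Riccati solution, the Hamilton--Jacobi inequality (HJ), and the observation that by periodicity the boundary term for the competitor has the sign dictated by $W(T)-W(0)>0$ while the boundary term along $u_0$ vanishes (your $V$ differs from the paper's $p(t)\cdot y+\tfrac12(y-u_0)\cdot \widetilde{W}(t)(y-u_0)$ only by a function of $t$, which is immaterial). One step you wave at with ``possibly adjusted by a lower-order term'' is, however, not optional: the paper first replaces $W$ by a solution $\widetilde{W}$ of the \emph{perturbed} Riccati equation $\dot{\widetilde{W}}-C+\widetilde{W}A+A^T\widetilde{W}+\widetilde{W}B\widetilde{W}=-\varepsilon_0\Id$ with $\widetilde{W}(T)-\widetilde{W}(0)>\varepsilon_0\Id$, obtained from the embedding theorem for ODEs (Theorem 4.1 in \cite{hestenes_1966}). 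This strict margin $-\varepsilon_0\Id$ is what allows (HJ) to survive the cubic Taylor remainder of $L$ and hence to hold on a tube of \emph{positive} radius around $u_0$, and it also keeps the boundary inequality strict; with the unperturbed $W$ the inequality (HJ) can only be closed along $u_0$ itself.

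The genuine gap is your (WLM) argument. From Theorem~\ref{th:DLMsuf} you get $\delta^2\A(v)>0$ for every non-zero $v\in V$, and you then claim that a ``standard Taylor expansion'' $\A(u_0+v)=\A(u_0)+\tfrac12\delta^2\A(v)+o(\norm{v}^2)$ finishes the proof. In the infinite-dimensional space $V$ this inference fails: strict positivity of a quadratic functional does not imply coercivity, so there is no constant $c>0$ with $\delta^2\A(v)\ge c\norm{v}^2$ in any norm controlling the remainder, and the $o(\norm{v}^2)$ term (which for a $C^2$ Lagrangian is of size $o(1)\cdot\norm{v}_{H^1}^2$ as $\norm{v}_{C^1}\to 0$) need not be dominated. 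This is the classical pitfall that positivity of the second variation is necessary but not sufficient for a weak local minimum. The paper does not take this route at all: it obtains (WLM) from the very same verification-function argument, noting that when (W') is dropped the inequality (HJ) still holds on the restricted tube $RT(u_0,\varepsilon)$, which is exactly what is needed when competitors are close to $u_0$ in the $C^1$ norm. To repair your proof you would either have to establish genuine $H^1$-coercivity of $\delta^2\A$ from (L') and (SR) (for instance by running the argument of Theorem~\ref{th:DLMsuf} on a perturbed functional), or simply run your (SLM) argument with (HJ) restricted to $RT(u_0,\varepsilon)$ as the paper does.
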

\begin{proof}
   By the (SR) condition, there exists a symmetric solution $W(t)$ of the Riccati
   differential
   equation \eqref{eq:rde}, defined on the whole $[0,T]$ and such that $W(T)-W(0)>0$. 
   From the embedding theorem of differential equations (see for instance Theorem 4.1 in
   \cite{hestenes_1966}), there
   exists $\varepsilon_0>0$ and a symmetric matrix function $\widetilde{W}:[0,T] \to \R^{n
   \times n}$ such that
   \[
      \dot{\widetilde{W}}-C+\widetilde{W}A+A^T\widetilde{W}+\widetilde{W}B\widetilde{W} = -\varepsilon_{0}
      \Id,
   \]
   and $\widetilde{W}(T)-\widetilde{W}(0) > \varepsilon_0 \Id$.
   We set
   \begin{gather*}
      p(t)    = L_{\dot{u}}(t, u_0(t), \dot{u}_0(t)), \\
      V(t,y)  = p(t) \cdot y + \frac{1}{2}(y-u_0(t)) \cdot \widetilde{W}(t)(y-u_0(t)).
   \end{gather*}
   Assume for the moment that $V(t,y)$ satisfies condition (HJ) for all $(t,y ) \in
   T(u_0, \varepsilon)$ and for all $v \in \R^n$, and let $u \in X$ be another
   $T$-periodic competitor such that $\norm{u-u_0}_\infty < \varepsilon$.
   Hence, substituting $(t,u(t),\dot{u}(t))$ for $(t,y,v)$ in \eqref{eq:HJineq} and
   integrating on $[0,T]$, we get
   \[
      \begin{split}
      \int_{0}^{T}L(t,u(t),\dot{u}(t))\, dt + \big( V(0,u(0)) - V(T,u(T)) \big) \geq \\
      \int_{0}^{T}L(t,u_0(t),\dot{u}_0(t))\, dt + \big( V(0,u_0(0)) - V(T,u_0(T)) \big).
      \end{split}
   \]
   Note that, since $u(t), \, u_0(t)$ and $L(t, \cdot,\cdot)$ are $T$-periodic functions,
   then also $p(t)$ is $T$-periodic. Therefore, we have that
   \[
   \begin{split}
      V(0,u_0(0)) - V(T,u_0(T)) & = p(0) \cdot u_0(0) + \frac{1}{2} (u_0(0) - u_0(0))
      \cdot \widetilde{W}(0) (u_0(0)-u_0(0)) \\
      & - p(T) \cdot u_0(T) - \frac{1}{2} (u_0(T) - u_0(0))
      \cdot \widetilde{W}(T) (u_0(T)-u_0(0)) \\
      & = 0, 
   \end{split}
\]
\[
   \begin{split}
      V(0,u(0)) - V(T,u(T)) & = p(0) \cdot u(0) + \frac{1}{2} (u(0) - u_0(0))
      \cdot \widetilde{W}(0) (u(0)-u_0(0)) \\
      & - p(T) \cdot u(T) - \frac{1}{2} (u(T) - u_0(0))
      \cdot \widetilde{W}(T) (u(T)-u_0(0)) \\
      & = \frac{1}{2}\bigg( (u(0)-u_0(0))\cdot \widetilde{W}(0)(u(0)-u_0(0)) \\& -
      (u(0)-u_0(0))\cdot \widetilde{W}(T)(u(0)-u_0(0))  \bigg) \\
      & < 0.
   \end{split}
  \]
  Hence, the inequality above implies 
  \[
      \int_{0}^{T}L(t,u_0(t),\dot{u}_0(t))\, dt  \leq  \int_{0}^{T}L(t,u(t),\dot{u}(t))\,
      dt, 
  \]
  i.e. $u_0$ is a (SLM). 

  The proof that $V(t,y)$ satisfies (HJ) is the same as the one in
  \cite{clarke-zeidan_1986}. If condition (W') is dropped, condition (HJ)
  is satisfied only on a restricted tube $RT(u_0,\varepsilon)$, for
  some $\varepsilon>0$, and therefore $u_0$ is only a (WLM). 

\end{proof}
\section{Application to Celestial Mechanics problems}
\label{s:locminEx}
In this last section we show some examples of application of the above results to
problems of Celestial Mechanics.
We first consider circular solutions of the Kepler problem with potentials of
type $1/r^\alpha$, where $r$ is the distance from the origin and $\alpha > 0$.
%
Second, we take into account the figure-eight solution of the 3-body problem
\cite{chenciner-montgomery_2000}.


\subsection{Kepler problem with $\alpha$-homogeneous potential}
   Given $T>0$ and $\alpha>0$, we consider the action of the Kepler problem
   with $\alpha$-homogeneous potential
   \begin{equation}
      \A^\alpha(u) = \int_{0}^{T}\bigg( \frac{1}{2} \lvert \dot{u} \rvert^2 +
      \frac{1}{\lvert u \rvert^\alpha} \bigg) dt,
      \label{eq:kepProb}
   \end{equation}
   defined on the set 
   \begin{equation}
      X_k = \{ u \in H^1_T([0,T], \R^2 \setminus \{0\}): \text{deg}(u,0)=k \},
      \label{eq:Xk}
   \end{equation}
   where $H^1_T([0,T], \R^2 \setminus \{0\})$ is the space of $T$-periodic $H^1$ functions
   that do not intersect the origin, and $k \in \Z$ is an integer. 
   The equation of motion associated to the functional \eqref{eq:kepProb} is 
   \begin{equation}
      \ddot{u} = -\alpha\frac{u}{\lvert u \rvert^{(2+\alpha)}},
      \label{eq:2bpAlpha}
   \end{equation}
   and the coefficients of the second variation $\delta^2 \A^\alpha$ for a solution of
   \eqref{eq:2bpAlpha} are 
   \begin{equation}
      R(t) = L_{\dot{u}\dot{u}} = \Id, \quad Q(t) = L_{u\dot{u}} = 0, \quad 
      P(t) = L_{uu} = -\alpha \frac{\Id}{|u|^{(2+\alpha)}} + \alpha(\alpha+1)\frac{u
      u^T}{|u|^{(4+\alpha)}}.
      \label{eq:kep2var}
   \end{equation}
   For each $\alpha > 0$, there exists a $T$-periodic circular orbit given by
   \[
      u_0(t) = \big(a \cos(nt), a \sin(nt)\big), \quad n=\frac{2\pi}{T}, \quad a =
      \bigg(\frac{\alpha}{n^2}\bigg)^{\frac{1}{2+\alpha}}.
   \]
   In the Keplerian case $\alpha=1$, it is known that \eqref{eq:kepProb}
   attains its global minimum at the elliptical $T$-periodic functions satisfying
   the Keplerian equations of motion \eqref{eq:2bpAlpha}, and for which $T$ is the minimum
   period (see \cite{gordon_1977}).
   In \cite{venturelli_thesis}, the author generalized the result of \cite{gordon_1977},
   proving that
   \begin{itemize}
   \item[(i)] if $k = \pm 1$ and $\alpha \in (1,2)$, then the minimizers of $\A^\alpha$ on
         $X_k$ are the circular orbits;
   \item[(ii)] if $k \ne 0$ and $\alpha \in (0,1)$, then the minimizers of $\A^\alpha$ on
         $X_k$ are the collision-ejection solutions. 
   \end{itemize}
   In the proof of (ii) however, it is not mentioned the type of stationary point of the
   circular orbit. 
   Moreover, the author stated that finding the global minimizer of $\A^\alpha$ on
   $X_k$ for $|k|\geq 2$ and $\alpha \in (1,2)$ is still an opened problem.
   Therefore, the Kepler problem with $\alpha$-homogeneous potentials is a good benchmark to produce non-trivial examples for
   studying minimality properties of periodic solutions.

   \paragraph{Computations for $k=1$.} 
   Since the coefficients of the Jacobi differential equation \eqref{eq:jde2} 
   depend directly on the time, we used a numerical integrator to compute the 
   solution $(Y_0, Z_0)$ corresponding
   to the circular orbit with period $T= 2\pi$. The computations were performed for 
   $\alpha = 0.2, 0.4, 0.6, 0.8, 1, 1.2, 1.4, 1.6$, and the plot of the determinant of
   $Y_0$ as a function of time is shown in Figure~\ref{fig:detY0}.  
   \begin{figure}[!ht]
      \centering
         \includegraphics[scale=0.5]{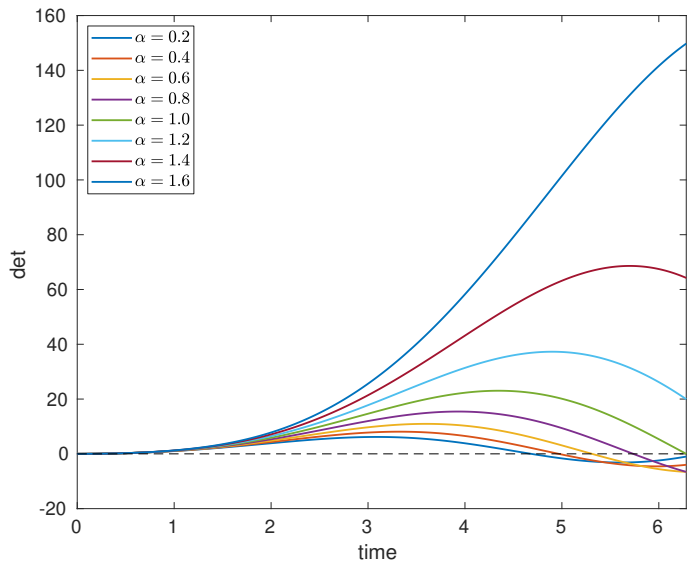} 
      \caption{The determinant of the matrix $Y_0(t)$ for the
      circular orbits with period $T=2\pi$, for different values of $\alpha$.}
      \label{fig:detY0}
   \end{figure}

   For the case $\alpha =1$, a conjugate point appears exactly at the end point $T=2\pi$ 
   of the time interval, hence the Jacobi condition (J) holds, but not the strengthened Jacobi 
   condition (J'). Therefore, the gravitational Kepler problem provides an example 
   where we can find global minimizers for which the sufficient conditions stated above
   for the weak local minimality are not satisfied, and the second variation is only non-negative definite.
   This is consistent with the result provided in \cite{gordon_1977}, because the circular
   orbit is embedded in a family of periodic solutions of the Kepler problem with the same
   period. This degeneration is reflected in the fact that the second variation $\delta^2
   \A^1$ is only non-negative.

   For the case $\alpha > 1$ there are no conjugate points in $(0,T]$, hence the strengthened
   Jacobi condition (J') holds for the second variation.
   The determinant of $Y_0(t)$ is greater than zero in $(0,T]$, and the dimension of the system
   is $n=2$, hence we are in the hypotheses of Lemma~\ref{lemma:riccatiBC1} and condition
   (SR) is therefore satisfied. It follows that the second variation of the circular orbit
   is positive definite, hence it is a (DLM). Moreover, the Lagrangian of the functional
   \eqref{eq:2bpAlpha} comes from a mechanical system and it is globally convex in the
   velocity $\dot{u}$, hence by Remark~\ref{rmk:sw} we know that
   the strengthened Weierstrass condition (W') holds. Therefore, by
   Theorem~\ref{th:SLMsuf} we are able to conclude that the circular orbit is a (SLM) on
   $X_1$. Note that this was expected, because we know that circular orbits are the global
   minimizers.

   For $\alpha \in (0,1)$ a conjugate point appears inside the interval $(0,T)$ in the examples with
   $\alpha=0.2, \, 0.4, \, 0.6, \,0.8$. Hence circular solutions are not even local
   minimizers, but rather saddle points. This provides more information than what was proved in
   \cite{venturelli_thesis}, where the author showed that the action of the circular orbit
   is greater than the action of the collision-ejection solution.

   \paragraph{Computations for $k=2$.} 
   As said above, finding the global minimizer of $\A^\alpha$ on $X_k$ is still an opened
   problem for $|k| \geq 2$. We provide here some computations for $k=2$ and $\alpha =
   1.2, 1.3, 1.4, 1.5, 1.6, 1.7, 1.8$. Figure~\ref{fig:detY0_k2} shows the determinant of
   $Y_0(t)$ relative to the circular orbit with period $2\pi$, in the time interval
   $[0, 4\pi]$. 
   \begin{figure}[!ht]
      \centering
         \includegraphics[scale=0.5]{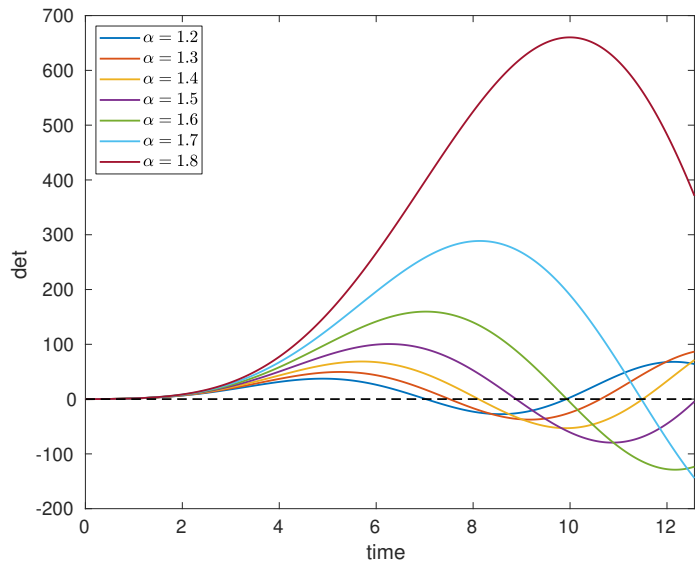} 
      \caption{The determinant of the matrix $Y_0(t)$ for the
      circular orbits with period $2\pi$ in the time interval $[0,4\pi]$, for different
   values of $\alpha>1$.}
      \label{fig:detY0_k2}
   \end{figure}
   We can notice that the circular solution has at least a conjugate point in $(0, 4\pi)$
   for $\alpha \leq 1.7$, hence the Jacobi condition (J) does not hold. Therefore it is
   not a minimizer anymore, but rather a saddle point. 

   To understand if a situation different from the case $k=1$ can occur, we can
   compare the action on $[0, 4\pi]$ (that we denote with $\A^\alpha_k, k=2$) of the circular
   solution $u_0$ with period $2\pi$, with the action of the collision-ejection solution
   $\bar{u}$. The action of the collision-ejection solution is
   \begin{equation}
      \A^\alpha_k(\bar{u}) = \bigg(
      \frac{2+\alpha}{2-\alpha}\bigg)(2\alpha^2)^{-\frac{\alpha}{2+\alpha}}T^{\frac{2+\alpha}{2-\alpha}}\bigg(
      \int_0^{2\pi}|\sin t|^{2/\alpha}dt
      \bigg)^{\frac{2\alpha}{2+\alpha}},
      \label{eq:ceKep}
   \end{equation}
   while the action of a $k$-circular solution is
   \begin{equation}
       \A^\alpha_k(u_0) = 
       k^{\frac{2\alpha}{2+\alpha}} (2+\alpha)\bigg(\frac{T}{2}\bigg)^{\frac{2+\alpha}{2-\alpha}}\bigg(
       \frac{\pi^2}{\alpha}
      \bigg)^{\frac{\alpha}{2+\alpha}},
      \label{eq:kcirc}
   \end{equation}
   see \cite{venturelli_thesis} for the details. 
   Table~\ref{tab:actionValues} reports the values obtained from numerical evaluation
   of \eqref{eq:ceKep} and \eqref{eq:kcirc}, for $T=4\pi$ and $k=2$.   
   \begin{table}[!ht]
      \centering
      \begin{tabular}{|c|ccccccc|}
         \hline
         $\alpha$                & 1.2 & 1.3 & 1.4 & 1.5 & 1.6 & 1.7 & 1.8 \\
         \hline
         $\A^\alpha_2(u_0)$      & 18.777& 18.698 & 18.599 & 18.483 & 18.355 & 18.218 & 18.073\\
         \hline
         $\A^\alpha_2(\bar{u})$  & 12.585& 13.286 & 14.349 & 15.974 & 18.563 & 23.057 & 32.281\\
         \hline
      \end{tabular}
      \caption{The value of the action $\A^\alpha_2$ of the circular solution $u_0$ with
         period $2\pi$ and of the collision-ejection solution $\bar{u}$.}
      \label{tab:actionValues}
   \end{table}
   Interestingly, we have that
   \begin{equation}
      \A^\alpha_2(u_0) < \A^\alpha_2(\bar{u})
      \label{eq:ineqcecirc}
   \end{equation}
   for $\alpha = 1.6, 1.7, 1.8$, meaning that the global minimizer is not a collision
   solution. On the other hand, for $\alpha=1.6, 1.7$ the 2-circular solution $u_0$ is not a
   local minimizer because condition (J) does not hold, hence necessarily the global
   minimum is achieved by a non-collision and non-circular $T$-periodic solution.
   These simple examples already show that finding the global minimum of $\A^\alpha$ of $X_k$ 
   for $\alpha \in (1,2)$ and $|k| \geq 2$ might be more complicated than the case of
   $k=\pm1$. 

   \subsection{The figure-eight solution of the 3-body problem}
   The figure-eight solution of the 3-body problem has been found first in
   \cite{moore_1993} by using numerical methods. Later on, in
   \cite{chenciner-montgomery_2000} the authors were able to give a proof of the existence
   of such orbit by minimizing the action \eqref{eq:NbodyAction} over a particular set of
   loops. More accurate numerical studies were performed in \cite{simo_2000, simo_2001,
   simo_2002}, and the linear stability was finally proved by using rigorous numerical
   techniques in \cite{kapela-simo_2007}. 
   From the numerical point of view, the figure-eight solution is computed in two steps
   (see e.g. \cite{fenucci-gronchi_2018, fenucci-jorba_2020} for details):
   \begin{enumerate}
      \item the action \eqref{eq:NbodyAction} is discretized by using truncated Fourier series,
         and a gradient descent method is applied to an eight-shaped first guess curve;
      \item a shooting method is applied to the output of the gradient descent method, and
         an accurate initial condition is computed.
   \end{enumerate}
   However, this procedure does not ensure that the final solution is actually a minimizer of the
   action. 
   The initial conditions computed with this method are (see
   \cite{chenciner-montgomery_2000})
   \[
      u_1 = (0.97000435669734, -0.24308753153583), \quad u_2 = -u_1, \quad u_3 = (0, 0),
   \]
   \[
      \dot{u}_3 = (-0.93240737144104, -0.86473146092102), \quad \dot{u}_1 =
      -\frac{\dot{u}_3 }{2}, \quad \dot{u}_2 = -\frac{\dot{u}_3 }{2}, 
   \]
   while the period corresponds to $T \simeq 6.32591398292621$. The figure-eight solution
   also has a dihedral symmetry, meaning that it satisfies
   \begin{equation}
      \begin{cases}
         u_1(t+T/6) = G u_2(t),\\
         u_2(t+T/6) = G u_3(t),\\
         u_3(t+T/6) = G u_1(t),
      \end{cases}
      \qquad 
      G = 
      \begin{pmatrix}
         -1 & 0  \\
         0  & 1 \\
      \end{pmatrix},
      \label{eq:extrasym8}
   \end{equation}
   and 
   \begin{equation}
      \begin{cases}
         u_1(t) = u_1(-t),\\
         u_2(t) = u_3(-t),\\
         u_3(t) = u_2(-t).
      \end{cases}
      \label{eq:timerevsym}
   \end{equation}
   Figure~\ref{fig:fig8} shows the trajectory and the initial configurations of the
   masses, obtained by integrating the 3-body problem using the above initial conditions.
   \begin{figure}[!ht]
      \centering
      \includegraphics[width=0.5\textwidth]{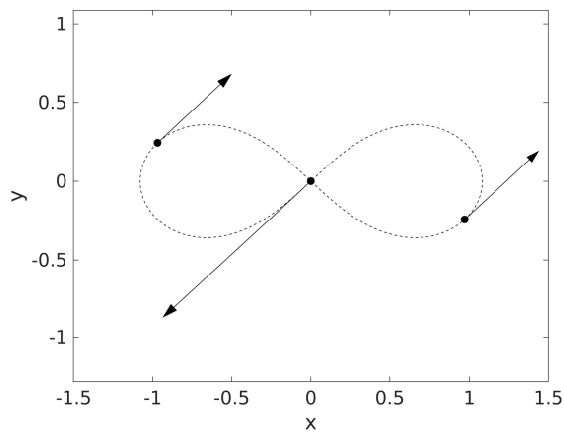}
      \caption{The trajectory and the initial configuration of the masses for the
      figure-eight solution of the 3-body problem.}
      \label{fig:fig8}
   \end{figure}

   By using a numerical integrator, we computed the solution $(Y_0, Z_0)$ of the Jacobi differential equation 
   on the whole timespan $[0,T]$. The determinant of $Y_0$ is plotted as a function of the
   time in Figure~\ref{fig:det8}.
   \begin{figure}[!ht]
      \centering
      \includegraphics[width=0.5\textwidth]{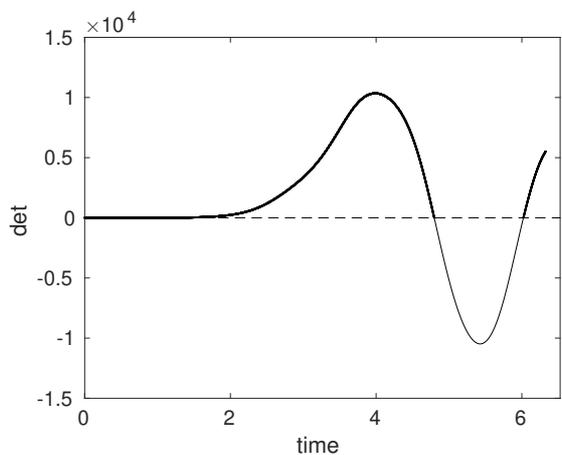}
      \caption{The determinant of $Y_0$ for the solution of the Jacobi differential
      equation corresponding to the figure-eight solution. The black thick curve
   corresponds to the intervals where $\det Y_0(t)>0$, while the thin black line
corresponds to the interval where $\det Y_0(t)<0$.}
      \label{fig:det8}
   \end{figure}
   A conjugate point appears, and therefore condition (J') does not hold.
   This means that the figure-eight is not a local minimizer of the action on the whole
   set of $T$-periodic loops (note that this result was already found in
   \cite{chtcherbakova_2006}), but rather a saddle point.

   It is worth stressing out that the proof of the existence of the figure-eight solution
   is made by proving that is minimizes the action \eqref{eq:NbodyAction}. However, the
   set on which it is a minimizer is not the whole space of $T$-periodic planar loops, 
   but rather on the subset $X$ of $T$-periodic loops fulfilling the symmetries 
   \eqref{eq:extrasym8} and \eqref{eq:timerevsym}. 
   The theory presented in Section~\ref{s:quadFun} and \ref{s:slm} is done by using the
   complete space of the $T$-periodic loops. This means that we also allow non-symmetric
   variations, that break the symmetry condition. For this reason, it is 
   possible that we can choose a non-symmetric variation that reduces the value of the
   action, and this is not in contradiction with the figure-eight being a minimizer on 
   the loop set $X$, that includes the symmetries. What we can do is to adapt the results 
   of Section~\ref{s:quadFun} and \ref{s:slm} by taking into the additional symmetry. 

   \subsubsection{Including the symmetry in the theory}
   Here we include the symmetry in the space of loops. We are not going to present again the 
   theory of local minimizers, but we underline the major changes to make in the
   conditions and the proofs. From now on, we always assume that condition (L') holds. 
   Let us suppose that the loops satisfy the condition 
   \begin{equation}
      u\bigg(t +\frac{T}{M}\bigg) = R u(t), \quad t \in [0,T],
      \label{eq:extrasym}
   \end{equation}
   where $R \in O(n)$ is a fixed orthogonal matrix and $M \in \N$ is an integer number. To make the discussion simpler, we suppose that $L$ does
   not depend on the time and
   \begin{equation}
      \int_{0}^{T} L(u,\dot{u})\,dt=M\int_{0}^{\frac{T}{M}}L(u,\dot{u}) \, dt.
      \label{eq:symmLagrangian}
   \end{equation}
   We therefore consider the functional
   \[
      \mathcal{F}(\bar{u}) = \int_{0}^{T/M}L(\bar{u},\dot{\bar{u}}) \, dt,
   \]
   defined on the set of loops $\bar{u}:[0,\, T/M ]\to \R^{n}$
   such that $R\bar{u}(0) = \bar{u}(T/M)$.
   Note that by means of \eqref{eq:symmLagrangian}, if $u_0:[0,T] \to \R^n$ is a minimizer of the functional $\A$, then
   the restriction 
   \[
      u_0\big\lvert_{[0,T/M]}:\bigg[0,\,\frac{T}{M}\bigg]\longrightarrow \R^{n},
   \]
   is a minimizer of $\mathcal{F}$. Vice versa, if $\bar{u}_0:[0,T/M]\to \R^{n}$ is a
   minimizer of $\mathcal{F}$, then we can extend it to a closed loop $u_0:[0,T] \to
   \R^{n}$ by using the symmetry \eqref{eq:extrasym}, 
   and we obtain a minimizer for $\A$. 
   Therefore, we study the minimality of $u_0$ restricted to the interval
   $[0,T/M]$ as stationary point of the functional $\mathcal{F}$. 
   
   To understand what is the equivalent condition of the (SR) condition, we follow the steps of the
   proof of Theorem~\ref{th:DLMsuf}. We still
   write the quadratic functional using a symmetric solution $W$ defined on $[0, T/M]$ and, integrating by parts,
   the term outside the integral becomes
   \[
   \begin{split}
      v(t) \cdot W(t) v(t)\bigg|_0^{T/M} & = v(T/M)\cdot W(T/M) v(T/M) - v(0) \cdot
      W(0) v(0) \\ 
      & = v(0)\cdot \bigg( R^T W(T/M) R - W(0) \bigg) v(0).
   \end{split}
   \]
   Therefore, a symmetric solution $W$ of the Riccati differential equation satisfying the boundary
   condition
   \begin{equation}
      R^T W(T/M)R - W(0)>0,
      \label{eq:SRsym}
   \end{equation}
   is sufficient to ensure the positivity of a quadratic functional. (SR) condition is
   then replaced by (SR*), i.e. there exists a symmetric solution $W$ of the (RDE)
   defined on the whole $[0,T/M]$ such that \eqref{eq:SRsym} holds.
   
   The equivalent of Lemma~\ref{lemma:riccatiBC1} is obtained by replacing $T$ with
   $T/M$ and (SR) with (SR*). The proof remains the same if we notice that the map 
   $w \mapsto Rw$ is invertible and maps the unit sphere onto itself. 
   
   Regarding the weak and the strong local minimality, the proof of Theorem \ref{th:SLMsuf}
   remains the same if we assume that 
   \begin{equation}
      p(0) - R^T p\bigg( \frac{T}{M} \bigg) = 0,
      \label{eq:mod3}
   \end{equation}
   where $p(t) = L_{\dot{u}}(t, u_0(t), \dot{u}_0(t))$.
   Note that if the derivative $L_{\dot{u}}$ is such that 
   \begin{equation}
      L_{\dot{u}}(Ru, Rv) = R L_{\dot{u}}(u,v),
      \label{eq:extraSymLag}
   \end{equation}
   for all $(u,v)$, then condition \eqref{eq:mod3} is verified, and the remaining part of the proof 
   of Theorem \ref{th:SLMsuf} is the same. 

   \paragraph{Implications for the figure-eight orbit.} 
   Denoting with $O_2$ the $2 \times 2$ matrix containing only zeros, and setting
   \begin{equation}
      R = 
      \begin{pmatrix}
         O_{2} & G & O_{2} \\ 
         O_{2} & O_{2} & G \\
         G & O_{2} & O_{2}
      \end{pmatrix} \in O(6),
      \label{eq:extrarot}
   \end{equation}
   the symmetry \eqref{eq:extrasym8} of the figure-eight can be written as
   $u(t+T/6) = R u(t)$, where $u(t) = \big( u_1(t), u_2(t), u_3(t) \big) \in \R^6$.
   From Figure~\ref{fig:det8}, we can see that the determinant of $Y_0(t)$ is positive in
   the whole interval $[0,T/6]$. Since the orbit is planar, the dimension $n=6$ of the
   system is even, and by applying the corresponding version of Lemma~\ref{lemma:riccatiBC1} we find
   that condition (SR*) holds. Therefore, by the corresponding version of Theorem~\ref{th:DLMsuf}
   the figure-eight solution is a (DLM) over the
   space $X$ of $T$-periodic loops satisfying the symmetry condition \eqref{eq:extrasym8}. 

   Moreover, condition \eqref{eq:extraSymLag} is trivially satisfied for the Lagrangian of the
   $N$-body problem, and the Weierstrass condition (W') holds because of
   Remark~\ref{rmk:sw}. Therefore, we can apply the corresponding version of
   Theorem~\ref{th:SLMsuf}, finding that the figure-eight solution is a (SLM) on the same set of symmetric
   loops $X$. 

\section*{Acknowledgments}
The author wishes to thank V. Zeidan for the suggestions about the literature, 
and G. F. Gronchi for his useful comments.
The author has been partially supported by the MSCA-ITN Stardust-R, Grant Agreement n. 813644 under the
H2020 research and innovation program.
\section*{Data availability statement}
The datasets generated during and/or analysed during the current study are available from the corresponding author on reasonable request.
\bibliography{mybib}{} 
\bibliographystyle{plain}
\end{document}